\documentclass[11pt,a4paper]{article}
\usepackage{amsmath}
\usepackage{amsfonts}
\usepackage{amssymb}
\usepackage{color}
\usepackage[utf8]{inputenc}
\usepackage{authblk}
\newtheorem{theorem}{Theorem}[section]
\newtheorem{definition}[theorem]{Definition}
\newtheorem{proposition}[theorem]{Proposition}

\newenvironment{proof}{\paragraph{Proof:}}{\hfill$\square$}
\newcommand{\N}{\mathbb{N}}

\topmargin -0.5in
\oddsidemargin -0.2in
\evensidemargin -0.25in
\textheight 9.5in
\textwidth 6.5in
\linespread{1.07}
\title{On the Algebraic Representation of One-Tape Deterministic Turing Machine}

\author{Yue Liu \thanks{liuyue@fzu.edu.cn}}

\affil{College of Mathematics and Computer Science, Fuzhou University, Fujian, China, 350116}

\date{}
\begin{document}
%\openup .5\jot
\maketitle

\begin{abstract}
An algebraic representation of the Turing machines is given, where the configurations of Turing machines are represented by 4 order tensors, and the transition functions by 8 order tensors. Two types of tensor product are defined, one is to model the evolution of the Turing machines, and the other is to model the compositions of transition functions. It is shown that the two types of tensor product are harmonic in the sense that the associate law is obeyed.  \\
{\bf Keywords:} Algebraic representation; Turing machine; Tensor
\end{abstract}

\section{Introduction}

Turing machine was invented by Alan Turing in 1936 in \cite{Turing}. And it is still widely used as a fundamental computation model in most of the monographs on theoretical computer science and complexity theory, for example, \cite{Arora}, \cite{Du}, \cite{Garey}, \cite{Hopcroft}, \cite{Martin}, \cite{Sipser}.

Though very simple, Turing machine is believed to have the capability of simulating \emph{every physically realizable computation device} in Church-Turing thesis. A better understanding of Turing machines may lead to better understanding of the general computation procedures and the complexity classes such as P and NP. 

History has witnessed the power of algebraic methods in the studies of geometry, theoretical physics, etc. In this paper, an algebraic representation of the Turing machines is given, where the configurations of Turing machines are represented by 4 order tensors, and the transition functions by 8 order tensors. Two types of tensor product are defined, one is to model the evolution of the Turing machines, and the other is to model the compositions of transition functions. It is shown that the two types of tensor product are harmonic in the sense that the associate law is obeyed. 

\section{The representation} % (fold)
\label{sec:the_representation}
 
We adopt the definitions of Turing machines given in \cite{Sipser} in this paper.  A one-tape deterministic Turing machine $M$ is specified by $M=\langle Q,\Gamma,b,\Sigma,\delta,q_s,F \rangle$, where $Q=\{q_1(=q_s),q_2,\ldots,q_n\}$ is the set of states, $\Gamma =\{s_0,s_1,\ldots,s_m\}$ is the alphabet set including $b=s_0$, and $\delta: \Gamma\times Q \to \Gamma\times Q\times \{-1,1\}$ is the transition function. Let the cells on the tape be indexed by the integers $1,2,3,\ldots,$ according to their positions on the tape.

Given an instance of $M$, let $C_t$ be the configuration of the $t$-th step of the machine. Then the configuration $C_t$ can be represented by the 0-1 {\em characteristic tensor} $A(C_t)=(a_{ij,kl})$ of order 4, where $a_{ij,kl}=1$ if and only if the $i$-th cell contains the symbol $s_j$, the state of the controller is $q_k$ and the head is directing to the $l$-th cell; otherwise $a_{ij,kl}=0$.

In the above representation, the configuration $C_t$ contains 4 types of information, the index of a cell,  the symbol written in a cell, the state of the controller, and the position of the head. We may call the information of the index of a cell and  the symbol written in a cell the {\em local information}, since they are owned just by a cell; and the latter 2 types of information the {\em global information}, since they are shared by all the cells.

For technical reasons which will be explained later, we introduce a special state $q_0$. Write $ Q^*=Q\cup\{q_0\} $, extend the definition of $\delta$ to  
$\Gamma\times Q^* \to \Gamma\times Q^*\times \{-1,0,1\}$ as 
\[
	\delta(s,q)=\begin{cases}
		\delta(s,q), &\text{ if }q\in Q\\
		(s,q_0,0),   &\text{ if }q=q_0

	\end{cases}
\]
Write $\delta=(\delta_1,\delta_2,\delta_3) $, where $\delta_1: \Gamma\times Q^* \to \Gamma $, $\delta_2: \Gamma\times Q^* \to Q^*$, and $\delta_3: \Gamma\times Q \to \{-1,0,1\}$.

Write $[k]=\{0,1,2\ldots,k\}. $
%After the extension, the ranges of the indexes of the characteristic tensor of the configurations are $i\in \N^+$, $j\in [m]$, $k\in [n]$, $l\in\N $.

Define the {\em characteristic tensor} of $M$ as $B=B(M)=(b^{i_1j_1,k_1l_1}_{i_2j_2,k_2l_2 }) $, where $i_1,i_2\in \N^+$, $j_1,j_2\in [m]$, $k_1,k_2\in [n]$, $l_1,l_2\in\N^+ $, and $b^{i_1j_1,k_1l_1}_{i_2j_2,k_2l_2 }=1$ if one of the following conditions holds
\begin{enumerate}
	\item $k_1\ne 0$, $i_1\ne l_1$, $i_2=i_1$, $j_2=j_1$, $k_2=0$, $l_2=l_1$;
	\item $k_1\ne 0$, $i_1 = l_1$, $i_2=i_1$, $j_2= \delta_1(j_1, k_1)$, $k_2=\delta_2(j_1,k_1)$, $l_2=l_1+\delta_3(j_1,k_1)$.
\end{enumerate}
Otherwise, $b^{i_1j_1,k_1l_1}_{i_2j_2,k_2l_2 }=0$.

A cell is called an {\em active} one if the head is directing to it, otherwise it is called {\em inactive}. The first condition is corresponding to the inactive cells, and the second is corresponding to the active ones. 

In the above representation $B=(b^{i_1j_1,k_1l_1}_{i_2j_2,k_2l_2 }) $, the indexes $i_1j_1,k_1l_1$ reflect the current configuration, and $i_2j_2,k_2l_2 $ reflect the forthcoming one. If $i_1=l_1$, which means the current cell is an active cell, then the forthcoming configuration of this cell is completely determined. Otherwise for the inactive case $i_1\ne l_1$, the cell index and the symbol written (local information) can be determined, but the global information can not be determined alone by the current information, and the next state is set to be the artificially introduced state $q_0$. When the state of a cell is currently $q_0$, by the condition that $b^{i_1j_1,k_1l_1}_{i_2j_2,k_2l_2 }=1$ requires $k_1\ne 0$, this cell will vanish in the forthcoming step, which make sure that the cells with the newly introduced state $q_0$ can not interface the computation procedure furthermore.

The state $q_0$ can also be regarded as a new halting state introduced to deal with the ``illegal'' configurations.

Next we will define a binary operations of tensors with special orders to represent the evolution of the Turing machine. The Einstein notation will be used, which means, for example, that \[
	a_{i_1j_1,k_1l_1} b^{i_1j_1,k_1l_1}_{i_2j_2,k_2l_2}\triangleq \sum_{i_1j_1,k_1l_1} a_{i_1j_1,k_1l_1} b^{i_1j_1,k_1l_1}_{i_2j_2,k_2l_2}
\]
It is always assumed that the summations are well defined and obey the commutative as well as the  distributive law.

\begin{definition}\label{def:Type1prod}
Let $A=(a_{ij,kl})$ be a tensor of order 4, $B=(b^{i_1j_1,k_1l_1}_{i_2j_2,k_2l_2 }) $ be a tensor of order 8. Define the Type I tensor product of $A$ and $B$ to be $C=A\times_1 B$, where $C=(c_{ij,kl})$ is an order 4 tensor such that \[
c_{ij,kl}=\left(\sum_{k_2,l_2}a_{i_1j_1,k_1l_1} b^{i_1j_1,k_1l_1}_{ij,k_2l_2}  \right)\left(\sum_{i_2,j_2}a_{i_1j_1,k_1l_1}b^{i_1j_1,k_1l_1}_{i_2j_2,kl} \right)
\]
\end{definition}

In Defintion \ref{def:Type1prod}, the first summation is to update the local information, and the second is for the global information.

Let $A=(a_{ij,kl})$ be a 4 order tensor such that $k\in [n]$, then $A|_{k\ne 0}$ is the tensor obtained from $A$ by deleting the entries whose $k$ index is 0.

\begin{theorem}\label{thm:product_and_DTM}
	Let $M$ be a DTM, $C_t$ be the $t$-th configuration of $M$ with a given instance, and $B$ be the characteristic tensor of $M$. Suppose that $A_1$ is a 4 order tensor such that $A_1|_{k\ne 0}=A(C_1)$, define $A_{t+1}=A_t\times_1 B $ for $t=1,2,\ldots.$ Then $A(C_t)=A_t|_{k\ne 0} $ .
\end{theorem}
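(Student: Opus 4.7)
I would prove the statement by induction on $t$, the base case $t=1$ being the hypothesis $A_1|_{k\ne 0}=A(C_1)$. For the inductive step, assume $A_t|_{k\ne 0}=A(C_t)$ and show $(A_t\times_1 B)|_{k\ne 0}=A(C_{t+1})$. A first useful observation is that the $k=0$ slice of $A_t$ never enters the product: by the definition of $B$, every nonzero $b^{i_1j_1,k_1l_1}_{i_2j_2,k_2l_2}$ satisfies $k_1\ne 0$, so only $A_t|_{k_1\ne 0}=A(C_t)$ contributes to the contractions. Hence in what follows I may replace $A_t$ by $A(C_t)$.

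Describe $C_t$ by tape symbols $s_{\sigma_i}$ at cells $i$, controller state $q_{K_t}$, and head position $L_t$, so that $a^{(t)}_{ij,kl}=[j=\sigma_i][k=K_t][l=L_t]$ (Iverson brackets). The structural fact I would then exploit is that the Type I product of Definition~\ref{def:Type1prod} factorises as $(A_t\times_1 B)_{ij,kl}=F_{\mathrm{loc}}(i,j)\,F_{\mathrm{glob}}(k,l)$, where $F_{\mathrm{loc}}$ is the first bracketed sum (depending only on the local indices $i,j$) and $F_{\mathrm{glob}}$ is the second (depending only on the global indices $k,l$). I would evaluate each factor by splitting on the two cases in the definition of $B$, with the inputs pinned to $j_1=\sigma_{i_1}$, $k_1=K_t$, $l_1=L_t$. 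Case~1 ($i_1\ne L_t$) sends $(i_2,j_2,k_2,l_2)$ to $(i_1,\sigma_{i_1},0,L_t)$; Case~2 ($i_1=L_t$) sends it to $(L_t,\delta_1(\sigma_{L_t},K_t),\delta_2(\sigma_{L_t},K_t),L_t+\delta_3(\sigma_{L_t},K_t))$. Reading off, $F_{\mathrm{loc}}(i,j)$ equals $[j=\sigma_i]$ for $i\ne L_t$ and $[j=\delta_1(\sigma_{L_t},K_t)]$ for $i=L_t$; while $F_{\mathrm{glob}}(k,l)|_{k\ne 0}$ equals $[k=\delta_2(\sigma_{L_t},K_t)]\,[l=L_t+\delta_3(\sigma_{L_t},K_t)]$.

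Multiplying the two factors and recognising the new symbol at each cell, together with the new state $K_{t+1}=\delta_2(\sigma_{L_t},K_t)$ and head position $L_{t+1}=L_t+\delta_3(\sigma_{L_t},K_t)$, yields $(A_t\times_1 B)_{ij,kl}|_{k\ne 0}=A(C_{t+1})_{ij,kl}$, closing the induction. I expect the main obstacle to be the bookkeeping of $F_{\mathrm{glob}}$ at $k=0$: Case~1 contributes one ``inactive'' summand for each cell $i_1\ne L_t$, so the $k=0$ entries of the product pick up a large (in fact unbounded) multiplicity concentrated at $(0,L_t)$. This is precisely the noise that the truncation $|_{k\ne 0}$ removes, and is the technical reason the auxiliary state $q_0$ was introduced; once one restricts to $k\ne 0$, the computation comes out clean.
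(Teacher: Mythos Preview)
Your proposal is correct and follows essentially the same approach as the paper: induction on $t$, use that $k_1\ne 0$ in every nonzero $b^{i_1j_1,k_1l_1}_{i_2j_2,k_2l_2}$ to reduce to $A(C_t)$, and then evaluate the two bracketed sums of Definition~\ref{def:Type1prod} separately to recover the local and global parts of $C_{t+1}$. Your write-up is more explicit (Iverson brackets, the factorisation $F_{\mathrm{loc}}\cdot F_{\mathrm{glob}}$, the remark on the unbounded $k=0$ contribution), but the argument is the same as the paper's.
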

\begin{proof}
	By induction, we only need to prove that if $A_t|_{k\ne 0}=A(C_t) $, then $A(C_{t+1})=A_{t+1}|_{k\ne 0} $.

	Write $A_t=(a_{ij,kl}^{(t)})$ for $t\in \N^+ $ . 

	For $\forall k\ne 0$, we only need to show that $a^{(t+1)}_{ij,kl}\ne 0 $ if and only if the both summations are 1.

	Suppose that $\sum_{k_2,l_2}a^{(t)}_{i_1j_1,k_1l_1}b^{i_1j_1,k_1l_1}_{ij,k_2l_2} \ne 0$.  By the definition of $B$, $b^{i_1j_1,k_1l_1}_{ij,k_2l_2}\ne 0$ indicates that $i_1=i$ and $k_1\ne0$. Since $A_t|_{k\ne 0}$ is the characteristic tensor of $C_t$, there exists only one $a^{(t)}_{i_1j_1,k_1l_1}=1$  ($i_1=i$, $k_1\ne0$). So $\sum_{k_2,l_2}a^{(t)}_{i_1j_1,k_1l_1}b^{i_1j_1,k_1l_1}_{ij,k_2l_2} =1$. It is also easy to check that the nonzero term indicates that the $i$-th cell contains the symbol $s_j$ in $C_{t+1}$. 

	By a similar way, it is easy to check that $\sum_{i_2,j_2}a^{(t)}_{i_1j_1,k_1l_1}b^{i_1j_1,k_1l_1}_{i_2j_2,kl}=1$ if and only if the state in $C_{t+1}$ is $q_k$ and the position of the head is in the $l$-th cell, completing the proof.
\end{proof}

\section{The representation of the composition of a transition function} % (fold)
\label{sec:the_representation_of_the_composite_of_the_transition_function}

The evolution of a deterministic Turing machine is actually a Markov process. One of the most successful way to study Markov processes is to use the transition matrices. In the former section, we model the configurations of  deterministic Turing machines with 4 order tensors, the transition functions with 8 order tensors, and the one step evolution is modeled by the Type I tensor production. In this section  the definition of Type II tensor products is given to model the composition of the transition functions. And the Type I tensor product is generalized with respect to that the order of the second tensor can be other than 8.
\begin{definition}
Let $B=(b^{i_1j_1,k_1l_1;\ldots ;i_pj_p,k_pl_p}_{i_{p+1}j_{p+1},k_{p+1}l_{p+1}})$ be a tensor of order $4(p+1)$, $C=(c^{i_1j_1,k_1l_1;\ldots ;i_qj_q,k_ql_q}_{i_{q+1}j_{q+1},k_{q+1}l_{q+1}})$ be a tensor of order $4(q+1) $, $p,q\in \N^+$. Then  the Type II tensor product $B\times_2 C$ is defined to be the tensor $D$ of order $4(2pq+1 )$, where $D=(d^{i_1j_1,k_1l_1;\ldots ;i_{2pq}j_{2pq},k_{2pq}l_{2pq}}_{i_{2pq+1}j_{2pq+1},k_{2pq+1}l_{2pq+1}})$ and
\begin{eqnarray*}
	&   & d^{{i_1j_1,k_1l_1;\ldots ;i_{2pq}j_{2pq},k_{2pq}l_{2pq}}}_{i_{2pq+1}j_{2pq+1},k_{2pq+1}l_{2pq+1}}\\
	& = & \sum_{i_1''j_1'',k_1''l_1'';\ldots;i_q''j_q'',k_q''l_q''}  b^{{i_1j_1,k_1l_1;\ldots ;i_pj_p,k_pl_p}}_{{i_{1}'j_{1}'},k_{1}''l_{1}''}\cdot b^{{i_{p+1}j_{p+1},k_{p+1}l_{p+1};\ldots ;i_{2p}j_{2p},k_{2p}l_{2p}}}_{i_{1}''j_{1}'',{k_{1}'l_{1}'}}\\
	&   & \phantom{ \sum_{i_1''j_1'',k_1''l_1'';\ldots;i_q''j_q'',k_q''l_q''} } \cdot b^{{i_{2p+1}j_{2p+1},k_{2p+1}l_{2p+1};\ldots ;i_{3p}j_{3p},k_{3p}l_{3p}}}_{{i_{2}'j_{2}'},k_{2}''l_{2}''}\cdot b^{{i_{3p+1}j_{3p+1},k_{3p+1}l_{3p+1};\ldots ;i_{4p}j_{4p},k_{4p}l_{4p}}}_{i_{2}''j_{2}'',{k_{2}'l_{2}'}}\\
	&   & \phantom{ \sum_{i_1''j_1'',k_1''l_1'';\ldots;i_q''j_q'',k_q''l_q''} } \cdots \quad   \cdots \cdot b^{{i_{(2q-1)p+1}j_{(2q-1)p+1},k_{(2q-1)p+1}j_{(2q-1)p+1};\ldots;i_{2pq}j_{2pq},k_{2pq}l_{2pq}}}_{i_q''j_q'',{k_q'l_q'}}\\
	&   & \phantom{ \sum_{i_1''j_1'',k_1''l_1'';\ldots;i_q''j_q'',k_q''l_q''} } \cdot c^{{i_1'j_1',k_1'l_1';\ldots;i_q'j_q',k_q'l_q'}}_{{i_{2pq+1}j_{2pq+1},k_{2pq+1}l_{2pq+1}}}.
\end{eqnarray*}
\end{definition}

\begin{definition}
	Let $A=(a_{ij,kl})$, $B=(b^{i_1j_1,k_1l_1;\ldots ;i_pj_p,k_pl_p}_{i_{p+1}j_{p+1},k_{p+1}l_{p+1}})$. Then the Type I tensor product $A\times_1 B$ is defined to be the tensor $C=(c_{ij,kl})$, where 
	\begin{eqnarray*}
		c_{ij,kl} & = & \left(\sum_{k',l'} a_{i_1j_1,k_1l_1}\cdots a_{i_pj_p,k_pl_p}b^{i_1j_1,k_1l_1;\ldots ;i_pj_p,k_pl_p}_ {ij,k'l'}\right)\left(\sum_{i',j'} a_{i_1j_1,k_1l_1}\cdots a_{i_pj_p,k_pl_p}b^{i_1j_1,k_1l_1;\ldots ;i_pj_p,k_pl_p}_ {i'j',kl}\right).
	\end{eqnarray*}
\end{definition}

The following proposition can be verified directly form the definitions.

\begin{proposition}
	Let $A=(a_{ij,kl})$, $B=(b^{i_1j_1,k_1l_1;\ldots ;i_pj_p,k_pl_p}_{i_{p+1}j_{p+1},k_{p+1}l_{p+1}})$ and $C=(c^{i_1j_1,k_1l_1;\ldots ;i_qj_q,k_ql_q}_{i_{q+1}j_{q+1},k_{q+1}l_{q+1}})$. Then \[
		(A\times_1 B) \times_1 C = A\times_1 (B\times_2 C),
	\] and the Type II tensor product $\times_2$ obeys the associative law.
\end{proposition}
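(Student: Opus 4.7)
The plan is to prove both claims by direct expansion from the definitions, as the author himself suggests. The crucial structural fact to keep in mind is that, for any 4-tensor $A$ and any $B$ with $p$ upper quartets, every entry of $D:=A\times_1 B$ factors as $d_{ij,kl}=L_{ij}(A,B)\,G_{kl}(A,B)$, where $L_{ij}(A,B):=\sum_{k',l'}a_{i_1j_1,k_1l_1}\cdots a_{i_pj_p,k_pl_p}\,b^{i_1j_1,k_1l_1;\ldots;i_pj_p,k_pl_p}_{ij,k'l'}$ is the local factor (obtained by summing out the global part of the lower index) and $G_{kl}(A,B)$ is the symmetric global factor. The Type~II definition is engineered precisely to reproduce this local/global split after one more contraction with $A$.

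For the identity $(A\times_1 B)\times_1 C=A\times_1 (B\times_2 C)$, I would first expand the left-hand side. Writing $D=A\times_1 B$ and applying the generalized Type~I formula to $D\times_1 C$, each of the $q$ factors $d_{i_s'j_s',k_s'l_s'}$ appearing in the sum splits as $L_{i_s'j_s'}(A,B)\,G_{k_s'l_s'}(A,B)$, so the local part of $(A\times_1 B)\times_1 C$ takes the form $\sum_{k'',l''}\prod_{s=1}^q L_{i_s'j_s'}(A,B)\,G_{k_s'l_s'}(A,B)\cdot c^{i_1'j_1',k_1'l_1';\ldots;i_q'j_q',k_q'l_q'}_{ij,k''l''}$, with Einstein sums over the single-primed indices; the global part is symmetric. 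For the right-hand side, I would first form $B\times_2 C$ by its defining formula and then apply Type~I with $A$, contracting $2pq$ copies of $A$ against the $2pq$ upper quartets. The key combinatorial observation is that in the Type~II formula the odd-indexed $b$-factor $b^{I_{2r-1}}_{i_r'j_r',k_r''l_r''}$ has its global lower slot filled by a double-primed dummy, so once its $p$ upper indices are contracted with the corresponding batch of $a$'s and the double-primed pair $(k_r'',l_r'')$ is summed out, it reduces to $L_{i_r'j_r'}(A,B)$; symmetrically, the even-indexed $b$-factor $b^{I_{2r}}_{i_r''j_r'',k_r'l_r'}$ reduces to $G_{k_r'l_r'}(A,B)$. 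Substituting and comparing yields termwise equality with the expansion of the left-hand side.

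For associativity of $\times_2$, I would perform the analogous direct expansion: write out both $(B\times_2 C)\times_2 E$ and $B\times_2 (C\times_2 E)$ using the Type~II formula and check that the two big sums over dummy indices, together with the resulting patterns of $b$-, $c$-, and $e$-factors, coincide up to renaming of bound variables. The main obstacle throughout is purely notational. The Type~II formula carries two intertwined layers of summation, the explicit one over $i_s'',j_s'',k_s'',l_s''$ and the implicit Einstein one over the single-primed indices shared with $c$; one must keep straight which upper quartet of which $b$-factor belongs to which batch of $a$'s and which lower slot of each $b$-factor is contracted with which upper slot of $c$. Once the single observation ``double-primed indices are the dummies that turn the $b$-factors into $L$'s or $G$'s'' is internalized, the verification is mechanical and uses only the distributive law, which the paper has already granted.
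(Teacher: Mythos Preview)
Your proposal is correct and follows exactly the approach the paper indicates: the proposition ``can be verified directly from the definitions,'' and the paper gives no further argument. Your local/global factorization $d_{ij,kl}=L_{ij}(A,B)\,G_{kl}(A,B)$ together with the observation that the double-primed dummies in the Type~II formula collapse the odd and even $b$-factors to $L$'s and $G$'s is precisely the mechanism that makes the direct verification go through.
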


{\bf Acknowledgment }

This work was supported by National Natural Science Foundation of China 11571075. Part of the work was done when the author was visiting the Department of Mathematics, the College of William \& Mary. The visiting was supported by China Scholarship Council (CSC).

\end{document}